\theoremstyle{definition}
\newmdenv[linewidth=2pt,roundcorner=10pt,backgroundcolor=white,font=\bfseries]{questionbox}
\newlength\myindent
\let\oldReturn\Return
\renewcommand{\Return}{\State\oldReturn}
\newtheorem{thm}{Theorem}[section]
\newtheorem{cor}[thm]{Corollary}
\newtheorem{prp}[thm]{Proposition}
\newtheorem{clm}[thm]{Claim}
\newtheorem*{clm*}{Claim}
\newcommand{\cproof}{\noindent{\it Proof of Claim.}\ } 
\newcommand{\cqed}{\hfill\rule{1.3mm}{3mm}}
\newenvironment{cpf}{\cproof }{\cqed}
\theoremstyle{definition}
\numberwithin{equation}{section}
\definecolor{mypink2}{RGB}{219, 48, 122}
\newcounter{constant}
\begin{document}

\title[Classical and quantum algorithms for L-system inference]{A Graph-Based Classical and Quantum Approach to Deterministic L-System Inference}

\author{Ali Lotfi}
\address{Department of Computer Science, University of Saskatchewan, 110 Science Place, Saskatoon, SK, CANADA, S7N 5C9}
\email{all054@mail.usask.ca}
\author{Ian McQuillan}
\address{Department of Computer Science, University of Saskatchewan, 110 Science Place, Saskatoon, SK, CANADA, S7N 5C9}
\email{mcquillan@cs.usask.ca}
\author{Steven Rayan}
\address{Centre for Quantum Topology and Its Applications (quanTA) and Department of Mathematics and Statistics, University of Saskatchewan, 106 Wiggins Road, Saskatoon, SK, CANADA, S7N 5E6}
\email{rayan@math.usask.ca}

	\begin{abstract}
	L-systems can be made to model and create simulations of many biological processes, such as plant development. Finding an L-system for a given process is typically solved by hand, by experts, in a massively time-consuming process. It would be significant if this could be done automatically from data, such as from sequences of images. In this paper, we are interested in inferring a particular type of L-system, deterministic context-free L-system (D0L-system) from a sequence of strings. We introduce the characteristic graph of a sequence of strings, which we then utilize to translate our problem (inferring D0L-systems) in polynomial time into the maximum independent set problem (MIS) and the SAT problem. After that, we offer a classical exact algorithm and an approximate quantum algorithm for the problem.
	\end{abstract}

\maketitle

\tableofcontents

\section{Introduction}

\emph{Lindenmayer systems}, abbreviated as L-systems, were originally introduced in \cite{lindenmayer1968mathematical}. They are rewriting systems or formal grammar systems where the letters of a string are rewritten in parallel, and this process continues to produce a sequence of strings. One of their applications is modeling plant growth and biological systems, such as plant morphology \cite{prusinkiewicz2012algorithmic}. When the letters of an L-system are interpreted as drawing instructions, L-system simulators such as vlab \cite{algorithmicbotany}, can produce accurate graphical depictions of plant development over time.

Constructing L-systems algorithmically from a sequence of images could make L-systems and their creation more practical. This is because developing an L-system for a growing plant currently requires expertise in L-system construction and is extremely time-consuming \cite{cieslak2022system}. The process of deriving an L-system from data is known as the inductive inference problem. The automated process of inference of L-systems falls into an area of machine learning. Solving this problem would allow for the automatic creation of models and simulations. The images that are produced by these simulations could be used in different ways, for example to create synthetic data to enrich training sets for supervised machine learning applications, which typically lack diversity \cite{david2020global}. Inductive inference is the problem of taking one or more sequences of strings generated by an unknown L-system, and to infer an L-system that initially generates these strings.  

While classical algorithms have been proposed for the inductive inference problem for deterministic L-systems (D0L-systems, \cite{bernard2021techniques}) and stochastic L-systems (\cite{bernard2023stochastic}, \cite{lotfi2024optimal}), there is still room for improvement through faster and more efficient algorithms, approximate methods, or quantum algorithms. For D0L-systems, it is known that the problem can be solved in polynomial time if the size of the alphabet is fixed \cite{mcquillan2018algorithms}, but it is NP-complete if the alphabet is not fixed \cite{duffy2025inductive}. Quantum computing has shown promising results in approaching machine learning problems (\cite{wiebe2012quantum,wiebe2014quantum,lloyd2014quantum,kerenidis2016quantum,peruzzo2014variational,rebentrost2014quantum,harrow2009quantum,biamonte2017quantum}) and NP-complete problems (\cite{farhi2014quantum,montanaro2016quantum,aaronson2016complexity}). The main motivation of this paper is to build a bridge between quantum computing and the inductive inference problem.

This paper is a step toward developing classical and approximate quantum algorithms for the D0L inductive inference problem (inferring a D0L-system from a sequence of strings, hereafter D0LII). To achieve this, we present a theorem that relates D0LII to the maximum independent set (MIS) problem. Specifically, we propose a polynomial time and size encoding of inputs of D0LII into a graph, which we call the characteristic graph. We then show that the D0LII solution is equivalent to finding an MIS of a specific size in the characteristic graph. This encoding and the related theorem allow us to propose different solutions to D0LII. One solution is a quantum algorithm that solves the MIS through the Quantum Approximate Optimization Algorithm (QAOA, \cite{farhi2014quantum,ruan2023quantum,choi2019tutorial}) and then infers production rules. Another solution is an exact algorithm, a combination of an exact solution for MIS, followed by inferring the D0L-system rules. A third solution, which we will not discuss in detail but is evident, involves using a polynomial encoding of MIS into SAT and then employing modern SAT solvers. We will also examine advancements in quantum computing, particularly concerning MIS and how it could be modified to target the version of MIS to which we are concerned.

This paper is organized as follows: Section \ref{def-l} provides the necessary background on L-systems, the MIS problem, and the notations required for the presented quantum algorithm. Section \ref{def-encode-l-MIS-SAT} presents polynomial encoding into MIS and SAT. Section \ref{def-Quant-MIS} reviews the QAOA hybrid quantum algorithm for MIS. In the final section, we propose quantum algorithms for D0LII.

\section{Preliminaries and Notations}\label{def-l}

This section introduces the foundational concepts and notations required for our work. First, we define 0L-systems and D0L-systems, formally specify the MIS problem, and introduce quantum notations relevant to our approach. These preliminaries are needed to encode the D0LII into MIS.

In the study of formal language theory, an alphabet $V$ is a finite set of symbols. Furthermore, $V^*$ represents the set of all strings over $V$, which includes the empty word $\varepsilon$. A language over $V$ is any subset of $V^*$. Given any string $S$, we use the notation $|S|$ to refer to its length and $S[i]$ to refer to the $i$-th ($1\leq i \leq |S|$) character of $S$. Furthermore, for $i$ and $j$ where $1 \leq i \leq j \leq |S| + 1$, we define $S[i:j]$ as follows:
\begin{itemize}
    \item If $i = j$, then it is the empty string.
    \item If $i < j < |S| + 1$, then it is the substring from index $i$ (inclusive) to $j$ (exclusive).
    \item If $i < j = |S| + 1$, then it is the suffix starting from index $i$ (inclusive).
\end{itemize}

Please see \cite{hopcroft2001introduction} for an introduction to formal language theory and computational complexity. We also refer to the cardinality of a finite set $X$ as simply $|X|$, which represents the number of elements in $X$.

\subsection{0L-systems and D0L-systems}

Next, we introduce L-systems and related notations, following the approaches of \cite{mcquillan2018algorithms} and \cite{prusinkiewicz2012algorithmic}. A context-free L-system, also known as a 0L-system, applies productions to symbols independently of their context within a string. We denote an L-System as $G = (V, \omega, P)$, where $V$ is an alphabet, $\omega \in V^*$ is the axiom, and $P \subseteq V \times V^*$ is a finite set of productions. In the literature, L-systems that allow productions to the empty word are sometimes called \textit{propagating}. We write a production $(a, x) \in P$ as $a \rightarrow x$, where $a$ is called the predecessor, and $x$ is called the successor.

Typically, we assume that each $a \in V$ has at least one production $a \rightarrow x$ in $P$. However, for mathematical convenience, we may consider ``partial" L-systems where this property does not hold. A 0L-system $G$ is \textit{deterministic} (and called a D0L-system) if each $a \in V$ has exactly one production with it as a predecessor.
For a string $\mu = a_1 \cdots a_n \in V^*$, we write $\mu \Rightarrow \nu$ to indicate that $\mu$ directly derives $\nu$ if $\nu = x_1 \cdots x_n$, where $a_i \rightarrow x_i \in P$ for all $1 \leq i \leq n$. A derivation $d$ in $G$ comprises:

\begin{enumerate}
\item A trace $(w_0, \ldots, w_m)$ where $w_{i} \Rightarrow w_{i+1}$ for $0 \leq i < m$.
\item A function $\sigma_{d}: \{(i,j) \mid 0 \leq i < m , 1\leq j \leq |w_i| \} \rightarrow P$ such that if $w_i = A_1\cdots A_{|w_i|}$ and $1\leq j \leq |w_i|$, then $w_{i+1} = \alpha_1\ldots \alpha_{|w_i|}$ where $\sigma_{d}(i,j) = A_{j} \rightarrow \alpha_j$.
\end{enumerate}
The function $\sigma_d$ specifies which productions are applied to each letter in the derivation.

A sequence $\theta=(w_0,\ldots,w_m)$ is said to be compatible with a 0L system $G$ if $G$ can generate $\theta$ as a trace; otherwise, it's incompatible.
With these definitions, we can now formally state D0LII:

\vspace{0.5em}
\begin{questionbox}
	D0L inductive inference problem or D0LII: Given alphabet $V$, and a sequence $\theta = (w_0, \ldots, w_m)$ over $V$, construct a D0L $G = (V, \omega, P)$ such that $\theta$ is compatible with $G$, or decide that is not possible? 

\end{questionbox}
\vspace{0.5em}

\subsection{Maximum Independent Set (MIS) Problem}

In this part, we will define the MIS problem for graphs. In an undirected graph $G = (V(G), E(G))$, a set $I \subseteq V(G)$ is called an \textit{independent set}, if the induced graph on $I$ does not contain any edges; that is, there is no edge between any $u,v \in V(I)$. An independent set $I$ is termed a \emph{maximum independent set} if it satisfies the following condition:
\[
|I| = \max \left\{ |J| : J \subseteq V(G), J \text{ is independent} \right\}.
\] 
The problem of finding a MIS in a graph is NP-complete (\cite{garey1979computers}). 

\subsection{Quantum Notations}
In the sections where we discuss the quantum algorithm for MIS, we will use the following notations, which are adopted from \cite{nielsen2001quantum}:

\begin{itemize}
\item \( Z_i \): Pauli-Z operator on qubit \( i \)
\item \( X_i \): Pauli-X operator on qubit \( i \)
\item \( R_{Z_i}(\theta) \): Rotation around Z-axis by angle \( \theta \) on qubit \( i \)
\item \( R_{X_i}(\theta) \): Rotation around X-axis by angle \( \theta \) on qubit \( i \)
\item \( \ket{+} = \frac{1}{\sqrt{2}}(\ket{0} + \ket{1}) \): The superposition state
\item \( \delta_{ij} \): Kronecker delta, defined as \( \delta_{ij} = \begin{cases} 1 & \text{if } i = j \\ 0 & \text{otherwise} \end{cases} \)
\item $I$ and $\bar{1}$ are the identity matrix and a vector with all components being $1$ respectively.
\end{itemize}

Having proposed the main question we intend to solve, we are now ready to encode the D0LII into MIS.

\section{Encoding D0LII into MIS and SAT}\label{def-encode-l-MIS-SAT}
In this section, we will discuss how to map an instance of D0LII into MIS and SAT in polynomial time. 
One way to find an encoding from D0LII to SAT is to establish an encoding from D0LII to MIS. Since a polynomial reduction exists from MIS to SAT, denoted as $f$, it is sufficient to construct a polynomial reduction, such as $g$, from the D0LII to MIS. The reduction from MIS to SAT is known (\cite{garey1979computers}, and is implied from their NP-completeness). This relationship can be visualized as in Figure \ref{fig:dlip-mis-sat}.

\begin{figure}[htbp]
	\centering
	\begin{tikzpicture}[->,>=stealth',node distance=4cm, thick]
		
		\node (A) {D0LII};
		\node (B) [below left=of A] {MIS};
		\node (C) [below right=of A] {SAT};
		
		\path[every node/.style={font=\sffamily\small}]
		(A) edge node [midway, left] {$g$} (B)
		(A) edge node [midway, right] {$f \circ g$} (C)
		(B) edge node [above] {$f$} (C);
	\end{tikzpicture}
	\caption{Illustration of the relationship between D0LII, MIS, and SAT.}
	\label{fig:dlip-mis-sat}
\end{figure}
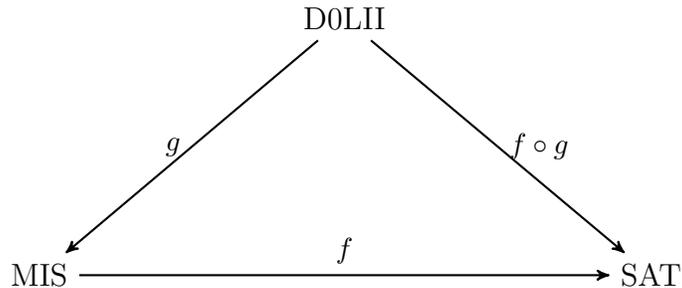

In order to find the encoding from D0LII to MIS, we will introduce the characteristic graph of a D0LII problem which will be a simple undirected graph. Such a graph encodes the information of $\theta$ and the restriction of D0LII into a single graph. Given the sequence $\theta=(w_0,\ldots,w_m)$, let $G_{\theta}$ be the characteristic graph of a D0LII with the following vertex set:

\begin{itemize}
	\item For $0 \leq i \leq m-1$ and $1 \leq j \leq |w_i|$, we construct complete graphs $G_{i,j}$ such that:
    \begin{itemize}
        \item If $j=1=|w_i|$:
            \[
        	V(G_{i,j}) = \{(i,j,1,|w_{i+1}|+1)\};
	    \]
        \item If $j=1$ and $j < |w_i|$:
            \[
        	V(G_{i,j}) = \{(i,j,1,\text{end}) : 1\leq \text{end} \leq |w_{i+1}|+1\};
	    \]
        \item If $1 < j$ and $j = |w_i|$:
            \[
        	V(G_{i,j}) = \{(i,j,\text{start},|w_{i+1}|+1) : 1 \leq \text{start} \leq |w_{i+1}|+1\};
	    \]
        \item If $1 < j$ and $j < |w_i|$:
            \[
        	V(G_{i,j}) = \{(i,j,\text{start},\text{end}) : 1 \leq \text{start} \leq \text{end} \leq |w_{i+1}|+1\}.
	    \]
    \end{itemize}
	\item We then define: \[V(G_{\theta}) = \bigcup\limits_{0\leq i<m,1\leq j\leq |w_{i}|} V(G_{ij}).\]
\end{itemize}

Hereafter, we define $E(G_{\theta})$, which includes all edges in 
\[
\bigcup\limits_{0 \leq i < m,\ 1 \leq j \leq |w_i|} E(G_{ij}).
\]
Also let $(i_1, j_1) \neq (i_2, j_2)$ and $v, w \in V(G_{\theta})$ such that 
\[
v \in V(G_{i_1, j_1}) \quad \text{and} \quad w \in V(G_{i_2, j_2}).
\]
Then we write 
\[
v = (i_1, j_1, s_1, e_1) \quad \text{and} \quad w = (i_2, j_2, s_2, e_2).
\]
The edge $(v, w) \in E(G_{\theta})$ if at least one of the following conditions holds:
\begin{enumerate}
	\item $w_{i_1}[j_1]=w_{i_2}[j_2]$ and $w_{i_1+1}[s_1:e_1] \neq w_{i_2+1}[s_2:e_2]$;
	\item $i_1=i_2$, $j_2=j_1+1$ and $e_1\neq s_2$.
\end{enumerate}
Having defined $G_{\theta}$, we are ready to present our theorem relating D0LII to MIS.

\begin{thm}\label{MIS-encoding}
    	Assume $V = \{a_1, \ldots, a_n\}$ is an alphabet, and $\theta = (w_0, \ldots, w_m)$ is a sequence over $V$, then there exists a D0L-system $G$ such that $\theta$ is compatible with $G$ if and only if $G_{\theta}$ has a MIS of size $k$, where:
	\[
	k = \sum_{i=0}^{m-1}\lvert w_i \rvert.
	\]
\end{thm}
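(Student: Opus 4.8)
The plan is to show that the size-$k$ independent sets of $G_\theta$ are in exact correspondence with the valid derivations $w_0 \Rightarrow \cdots \Rightarrow w_m$ of a D0L-system along $\theta$. The governing interpretation of the construction is that a vertex $(i,j,s,e)$ records the guess ``the $j$-th letter of $w_i$ has successor $w_{i+1}[s:e]$''; the clique $G_{i,j}$ forces at most one such guess per letter occurrence, edge condition (1) forbids two equal letters from carrying different successors (determinism), and edge condition (2) together with the boundary cases forces the chosen successors to tile $w_{i+1}$ contiguously. The first step is the counting observation: since $V(G_\theta)$ is the disjoint union of the cliques $V(G_{i,j})$ over the $k=\sum_{i=0}^{m-1}|w_i|$ index pairs, any independent set meets each clique in at most one vertex, so an independent set of size $k$ selects \emph{exactly} one vertex $(i,j,s_{i,j},e_{i,j})$ from each $G_{i,j}$.

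For the forward direction, I would start from a compatible system $G$ with derivation $w_0 \Rightarrow \cdots \Rightarrow w_m$. Each letter $w_i[j]$ has a unique successor $x_{i,j}$ and $w_{i+1}=x_{i,1}\cdots x_{i,|w_i|}$, so reading off the start/end offsets of $x_{i,j}$ inside $w_{i+1}$ yields indices with $s_{i,1}=1$, $e_{i,|w_i|}=|w_{i+1}|+1$, and $e_{i,j}=s_{i,j+1}$. These equalities place each vertex $(i,j,s_{i,j},e_{i,j})$ in $V(G_\theta)$ (the boundary cases hold), and since determinism gives $x_{i_1,j_1}=x_{i_2,j_2}$ whenever $w_{i_1}[j_1]=w_{i_2}[j_2]$, neither edge condition is ever triggered between two selected vertices; the selection is therefore an independent set of size $k$.

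For the converse, I would use the counting observation to extract one interval $(s_{i,j},e_{i,j})$ per occurrence from a given size-$k$ independent set and define the production function by $w_i[j] \to w_{i+1}[s_{i,j}:e_{i,j}]$ for each predecessor letter, assigning arbitrary productions (say $a \to a$) to letters not occurring in $w_0,\dots,w_{m-1}$ and setting $\omega=w_0$. Well-definedness of this function is precisely the failure of edge condition (1) across all pairs of occurrences, and $w_i \Rightarrow w_{i+1}$ follows once the chosen intervals are shown to concatenate to $w_{i+1}$: the $j=1$ and $j=|w_i|$ boundary cases pin the first start to $1$ and the last end to $|w_{i+1}|+1$, while the failure of edge condition (2) forces $e_{i,j}=s_{i,j+1}$, so the intervals partition $w_{i+1}$ with no gaps or overlaps.

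I expect the main obstacle to be the bookkeeping in this tiling argument: carefully handling the degenerate vertices (single-letter words where $j=1=|w_i|$ and empty successors where $s=e$) and confirming that interleaving the boundary conditions with condition (2) reconstructs $w_{i+1}$ exactly, rather than only matching it on overlapping ranges. A secondary subtlety worth flagging is that condition (1) ranges over pairs with $i_1 \neq i_2$ as well, which is exactly what promotes per-step consistency to the global determinism that a single production function requires.
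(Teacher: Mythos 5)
Your proposal is correct and follows essentially the same route as the paper's proof: the same one-vertex-per-clique counting argument, the same interval-offset construction of the independent set in the forward direction (the paper writes it via the successor lengths $Z_{i,j}$), and the same extraction of productions with condition (1) giving determinism and the boundary cases plus condition (2) giving the contiguous tiling of $w_{i+1}$ in the converse. Your explicit handling of letters absent from $w_0,\dots,w_{m-1}$ (assigning $a \to a$) is a small point of extra care that the paper glosses over, but it does not change the argument.
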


\begin{proof}
	$(\Rightarrow)$: Suppose D0L-system $G$ with production set $P$ derives $\theta$. We will show that  $G_{\theta}$ has a MIS of size $k$. Since $G_{\theta}$ induced on $G_{i,j}$ is a complete graph, then any independent set has size at most $k$. Therefore, it remains to be shown that some independent set attains this upper bound.  $G$ derives $\theta$, so there exists a derivation $d$ in $G$ consisting of:

\begin{enumerate}
\item A trace $(w_0, \ldots, w_m)$ where $w_{i} \Rightarrow w_{i+1}$ for $0 \leq i < m$.
\item A function $\sigma_{d}: \{(i,j) \mid 0 \leq i < m , 1\leq j \leq |w_i| \} \rightarrow P$ such that if $w_i = A_1\cdots A_{|w_i|}$ and $1\leq j \leq |w_i|$, then $w_{i+1} = \alpha_1\ldots \alpha_{|w_i|}$ where $\sigma_{d}(i,j) = A_{j} \rightarrow \alpha_j$.
\end{enumerate}

Let \( Z_{i,j} = |\alpha_{j}| \), where \( \sigma_d(i,j) = \alpha_j \) for \( 0 \leq i < m \), \( 1 \leq j \leq |w_i| \), and define:
\[
I = \{(i,j,s,e) \mid s = Z_{i,1} + \cdots + Z_{i,j-1} + 1, \, e = s + Z_{i,j}, \, 0 \leq i < m, \, 1 \leq j \leq |w_i| \}.
\]
Certainly, \( I = k \). It remains to show that \( I \) is an independent set.
 Assume the contrary, there exists $(i_1,j_1,s_1,e_1),(i_2,j_2,s_2,e_2) \in I$ such that:
	\[\left((i_1,j_1,s_1,e_1),(i_2,j_2,s_2,e_2)\right) \in E(G_{\theta}).\]
	For such an edge to exist, the pair must satisfy at least one of the conditions for edges of $G_{\theta}$. This edge can not exist due to the first condition since if:
	\[w_{i_1}[j_1] = w_{i_2}[j_2],\]
	then having that $P$ is the production set of a D0L-system and the way $I$ is constructed, we get:
	\[w_{i_1+1}[s_1:e_1] = w_{i_2+1}[s_2:e_2],\]
	which would give us that:
	\[\left((i_1,j_1,s_1,e_1),(i_2,j_2,s_2,e_2)\right) \notin E(G_{\theta}),\]
	which is a contradiction.
	Therefore, we assume such an edge exists due to the second condition. Then we have one of the following two cases:
    \begin{enumerate}
        \item $i_1=i_2$ and $j_2 = j_1 + 1$ and $e_1 \neq s_2$;
        \item $i_1=i_2$ and $j_1 = j_2 + 1$ and $e_2 \neq s_1$.
    \end{enumerate}
     The first case has contradiction with $G$ deriving $\theta$, since from $w_i$ to $w_{i+1}$, the image of $w_{i_1}[j_1]$ is not coming right before the image of $w_{i_2}[j_2]$. Similarly the second case leads to contradiction as well. Therefore, neither of the conditions could be satisfied for
	\[\left((i_1,j_1,s_1,e_1),(i_2,j_2,s_2,e_2)\right)\]
	to be in $E(G_{\theta})$ which is a contradiction.
	Therefore, $I$ is a max independent set of size $k$.
	
	$(\Leftarrow)$: For the backwards direction we assume that $G_{\theta}$ has MIS $I$ of size $k$. We define the axiom of our D0L-system $G$ to be $\omega = w_0$. Since the induced subgraph to $G_{i,j}$ is complete, then from each $G_{i,j}$ there exists a unique vertex in $I$. We identify such vertex with $I_{i,j}$. It is left to build the production set $P$. Define:
	\[P = \{w_i[j]\rightarrow w_{i+1}[s_{i,j}:e_{i,j}] \mid 0\leq i<m, 1\leq j\leq \lvert w_i\rvert, (i,j,s_{i,j},e_{i,j}) =  I_{i,j}\}.\]
	This production set, along with $\omega$, will give us an 0L-system. We first show that this 0L-system is a D0L-system. Assume the contrary: there exist $I_{i_1,j_1}, I_{i_2,j_2} \in I$ such that 
\[
I_{i_1,j_1} = (i_1, j_1, s_{i_1,j_1}, e_{i_1,j_1}) \quad \text{and} \quad 
I_{i_2,j_2} = (i_2, j_2, s_{i_2,j_2}, e_{i_2,j_2}),
\]
with the following conditions:
\[
w_{i_1}[j_1] \to w_{i_1+1}[s_{i_1,j_1}:e_{i_1,j_1}] \quad \text{and} \quad 
w_{i_2}[j_2] \to w_{i_2+1}[s_{i_2,j_2}:e_{i_2,j_2}] \in P,
\]
such that
\[
w_{i_1+1}[s_{i_1,j_1}:e_{i_1,j_1}] \neq w_{i_2+1}[s_{i_2,j_2}:e_{i_2,j_2}],
\]
and
\[
w_{i_1}[j_1] = w_{i_2}[j_2].
\]
However, by the first condition of edges, we have:
\[(I_{i_1,j_1},I_{i_2,j_2}) \in E(G_{\theta}),\]
which is a contradiction with $I$ being an independent set. Therefore, the 0L-system $G$ is a D0L-system. It remains to show that $G = (V,\omega,P)$ derives $\theta$. More specifically, we need to show that for $0\leq i<m$, $G$ applied to $w_i$ will give us $w_{i+1}$. $G$ applied to $w_i$, will give us the following:

\[w_{i+1}[s_{i,1}:e_{i,1}]w_{i+1}[s_{i,2}:e_{i,2}]  \ldots w_{i+1}[s_{i,|w_i|}:e_{i,|w_i|}],\]
where the last string is the concatenation of the substrings $w_{i+1}[s_{i,j}:e_{i,j}]$ for $1 \leq j \leq |w_{i+1}|$.
From the construction of $V_{G_{i,1}}$ and $V_{G_{i,|w_i|}}$, we know that:
\[s_{i,1} = 1 \text{ and } e_{i,|w_i|} = |w_{i+1}|+1.\]
Furthermore, since $I_{i,j}$ are vertices of an independent set, and by the second condition of edges $G_{\theta}$, we get that for $1\leq j< |w_i|$ we have:
\[e_{i,j} = s_{i,j+1},\]
finishing the proof that $G$ from $w_i$ derives $w_{i+1}$. Since $i$, $0\leq i<m$ was arbitrary, $G$ derives $\theta$ finishing the proof of the backward direction.

\end{proof}

\begin{cor}\label{MIS-encoding-cor-1}
	Given \(\theta=(w_0,\ldots,w_m)\), if  \(G_{\theta}\) has a MIS \(I\) of size \(k\) (where \(k\) is as defined in the statement of the theorem) in \(G_{\theta}\), then the D0L-system \((V, w_0, P)\) generates \(\theta\) where:
	\begin{enumerate}
		\item \(V = \{w_{i}[j] \mid 0 \leq i < m, 1 \leq j \leq |w_{i}|\}\),
		\item \(P = \{w_i[j] \rightarrow w_{i+1}[s_{i,j}:e_{i,j}] \mid 0 \leq i < m, 1 \leq j \leq |w_i|, (i,j,s_{i,j},e_{i,j}) \in I\}\).
	\end{enumerate}  
\end{cor}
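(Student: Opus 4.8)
The plan is to recognize that this corollary is precisely the constructive content of the backward ($\Leftarrow$) direction of Theorem \ref{MIS-encoding}, repackaged as a standalone construction statement, so the proof will proceed by invoking and lightly re-deriving that argument. First I would fix a MIS $I$ of size $k$ in $G_\theta$ and record the key counting observation: the index set $\{(i,j) : 0 \le i < m,\ 1 \le j \le |w_i|\}$ has exactly $k = \sum_{i=0}^{m-1} |w_i|$ elements, and $G_\theta$ contains one complete subgraph $G_{i,j}$ per such pair. Since an independent set meets any complete graph in at most one vertex and $I$ attains the maximum size $k$, $I$ must contain exactly one vertex from each $G_{i,j}$, say $(i,j,s_{i,j},e_{i,j})$. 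This is what makes the membership condition $(i,j,s_{i,j},e_{i,j}) \in I$ in the definition of $P$ well defined: it selects a unique pair $(s_{i,j},e_{i,j})$ for each predecessor position, which is the crucial point licensing the set-builder notation for $P$.

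With well-definedness in hand, I would verify the two claims in order, reusing the theorem's reasoning. For claim (1), that $V = \{w_i[j] : 0 \le i < m,\ 1 \le j \le |w_i|\}$ together with $w_0$ and $P$ forms a \emph{deterministic} system, I would argue by contradiction exactly as in the theorem: if a symbol occurred as both $w_{i_1}[j_1]$ and $w_{i_2}[j_2]$ with $w_{i_1+1}[s_{i_1,j_1}:e_{i_1,j_1}] \neq w_{i_2+1}[s_{i_2,j_2}:e_{i_2,j_2}]$, the first edge condition would force an edge between the corresponding vertices of $I$, contradicting independence; hence every predecessor in $V$ receives a single consistent successor and $P$ is a function. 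For claim (2), that $G$ generates $\theta$, I would show $w_i \Rightarrow w_{i+1}$ for each $0 \le i < m$ by writing the image of $w_i$ as the concatenation $w_{i+1}[s_{i,1}:e_{i,1}] \cdots w_{i+1}[s_{i,|w_i|}:e_{i,|w_i|}]$ and checking it telescopes: the boundary cases in the vertex-set construction give $s_{i,1}=1$ and $e_{i,|w_i|}=|w_{i+1}|+1$, while the second edge condition combined with independence forces $e_{i,j}=s_{i,j+1}$ for $1 \le j < |w_i|$, so the consecutive substrings abut and their concatenation equals $w_{i+1}$.

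The only genuine obstacle is bookkeeping rather than conceptual: I must confirm that the alphabet $V$ as defined here (the symbols actually appearing in $w_0,\ldots,w_{m-1}$) coincides with the set of predecessors of $P$, so that $G$ is a total D0L-system rather than a partial one, and that the edge conditions are applied to the correct ordered index pairs. Since every step is already carried out in the backward direction of Theorem \ref{MIS-encoding}, I expect the corollary to follow with essentially no new work beyond stating $\omega = w_0$ explicitly and making the counting argument for uniqueness of $(i,j,s_{i,j},e_{i,j})$; the proof can therefore be presented in a few lines as a direct consequence of the theorem.
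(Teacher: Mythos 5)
Your proposal is correct and follows exactly the paper's own route: the corollary is stated without separate proof precisely because it is the constructive content of the backward ($\Leftarrow$) direction of Theorem \ref{MIS-encoding}, and your re-derivation (one vertex of $I$ per complete subgraph $G_{i,j}$ by the counting argument, determinism via the first edge condition, and the telescoping concatenation $s_{i,1}=1$, $e_{i,j}=s_{i,j+1}$, $e_{i,|w_i|}=|w_{i+1}|+1$ via the second) matches that argument step for step. Your added remark on the well-definedness of the selection $(i,j,s_{i,j},e_{i,j})\in I$ is a small but welcome clarification that the paper leaves implicit.
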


\begin{cor}\label{MIS-encoding-cor-2}
	The D0LII can be polynomially encoded into an MIS by constructing \(G_{\theta}\), and the production rules can be extracted from the vertices selected by the algorithm solving the MIS. By utilizing the polynomial encoding of MIS to SAT, the D0LII can be polynomially encoded into SAT.
\end{cor}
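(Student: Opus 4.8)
The plan is to deduce Corollary \ref{MIS-encoding-cor-2} from the correctness already secured by Theorem \ref{MIS-encoding} and Corollary \ref{MIS-encoding-cor-1}, so that essentially nothing about \emph{what} the reduction computes remains open; the only thing left to verify is \emph{how fast} it computes. In other words, Theorem \ref{MIS-encoding} already shows that a compatible D0L-system exists if and only if $G_\theta$ has an independent set of size $k$ (since every independent set has size at most $k$, ``MIS of size $k$'' coincides with the threshold condition), and Corollary \ref{MIS-encoding-cor-1} already shows that the rules read off from such a set generate $\theta$. Thus I would structure the proof as three efficiency claims: (i) $g$ (building $G_\theta$) is polynomial-time; (ii) recovering the productions from a returned MIS is polynomial-time; and (iii) composing with the known reduction $f$ from MIS to SAT stays polynomial.

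For (i) I would bound the size of $G_\theta$. Writing $L = \sum_{i=0}^{m}|w_i|$ for the input size and $N = \max_i |w_i|$, each block $G_{i,j}$ is a complete graph whose vertices are indexed by pairs $(\text{start},\text{end})$ with $1 \le \text{start} \le \text{end} \le |w_{i+1}|+1$, of which there are at most $\binom{|w_{i+1}|+2}{2} = O(N^2)$ (the three boundary cases $j=1=|w_i|$, $j=1<|w_i|$, and $1<j=|w_i|$ only decrease this count). Summing over the at most $\sum_{i=0}^{m-1}|w_i| = k \le L$ pairs $(i,j)$ gives
\[
|V(G_\theta)| \le \sum_{i=0}^{m-1} |w_i|\,\binom{|w_{i+1}|+2}{2} = O\!\left(L\,N^2\right) = O(L^3),
\]
which is polynomial in the input. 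The number of candidate edges is at most $O(|V(G_\theta)|^2)$, and deciding whether a pair $(v,w)$ satisfies edge condition (1) or (2) requires one symbol comparison $w_{i_1}[j_1] \stackrel{?}{=} w_{i_2}[j_2]$ together with one substring comparison $w_{i_1+1}[s_1:e_1] \stackrel{?}{\neq} w_{i_2+1}[s_2:e_2]$, each computable in $O(N)$ time. Hence $G_\theta$ is constructed in time polynomial in $L$, so $g$ is a polynomial-time reduction.

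For (ii), given any MIS $I$ of size $k$ returned by a solver, Corollary \ref{MIS-encoding-cor-1} guarantees that the system $(V, w_0, P)$ with $P$ read off the selected vertices generates $\theta$; one scan of $I$ locates, for each block $G_{i,j}$, its unique selected vertex $(i,j,s_{i,j},e_{i,j})$ and emits $w_i[j] \to w_{i+1}[s_{i,j}:e_{i,j}]$, at cost $O(|I|\cdot N) = O(L\,N)$. This settles the first sentence of the corollary. For (iii), a polynomial reduction $f$ from MIS to SAT is standard (\cite{garey1979computers}), so $f \circ g$ is a composition of two polynomial-time reductions and is itself polynomial-time, yielding the claimed encoding of D0LII into SAT as depicted in Figure \ref{fig:dlip-mis-sat}.

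I expect no genuine obstacle, since the correctness is inherited from the theorem and corollary; the proof is a routine complexity accounting. The one point that must be tracked carefully is the quadratic blow-up per block coming from the $(\text{start},\text{end})$ pairs: this is exactly what one must confirm keeps $|V(G_\theta)|$ polynomial (rather than, say, exponential in the string lengths), and it is also what forces the overall $O(L^3)$ vertex bound and the correspondingly polynomial edge-construction time.
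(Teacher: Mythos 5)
Your proposal is correct and follows essentially the same route as the paper: correctness is inherited from Theorem \ref{MIS-encoding} and Corollary \ref{MIS-encoding-cor-1}, polynomiality of $g$ comes from the per-block $O(N^2)$ vertex count summed over the $k$ blocks (matching the paper's own bound $|V(G_\theta)| \le kl^2$ established in its complexity analysis), and the SAT encoding is the composition $f \circ g$ with the standard reduction $f$. Your accounting of the edge-construction and rule-extraction costs is a slightly more explicit version of what the paper leaves implicit, but it is the same argument.
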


\section{Quantum Algorithm for the Modified MIS}\label{def-Quant-MIS}

There have been recent developments in quantum computing regarding the MIS problem, specifically, the introduction of a quantum polynomial approximate algorithm by \cite{yu2021quantum} and a quantum algorithm with time complexity $\mathcal{O}(1.1488^n)$ \cite{dorn2005quantum}. The latter offers Grover-based algorithms for both maximal independent set and maximum independent set problems. However, since the oracle in the algorithm was not clearly specified, we developed our own approach more similar to \cite{yu2021quantum}. The approach we will practice here for solving MIS is based on turning the MIS into a QUBO (Quadratic Unconstrained Binary Optimization) matrix and then utilizing a QAOA to find an approximate solution. To see this, first note that for a graph $G$ with $n$ vertices, any candidate for an independent set for this graph can be represented by a vector of size $n$ such as $x = (x_1,\ldots,x_n)$ where $x_i \in \{0,1\}$ for $1\leq i\leq n$. Furthermore, let \( Q \) be an \( n \) by \( n \) matrix such that:

\begin{itemize}
\item \( Q_{i,j} = -1 \) if \( i = j \);
\item \( Q_{i,j} = 1 \) if \( (i, j) \in E(G) \);
\item \( Q_{i,j} = 0 \) if \( (i, j) \notin E(G) \).
\end{itemize}

Then, we see in the following proposition that an MIS induces a vector $x$ that minimizes $x^{\top} Q x$, where $x^{\top}$ is the transpose of the column vector $x$.

\begin{prp} \label{prop:mis-minimization}
Given a graph \( G = (V(G), E(G)) \) with \( V(G) = \{v_1, \ldots, v_n\} \) and a matrix \( Q \) as the corresponding QUBO matrix, an MIS of \( G \) will minimize the function \( f(x) = x^\top Q x \).
\end{prp}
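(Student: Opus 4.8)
The plan is to compute $f$ combinatorially on $0/1$ vectors and then compare arbitrary vertex subsets against independent sets. First I would rewrite $f(x) = x^\top Q x = \sum_{i,j} Q_{ij}\, x_i x_j$. Since each $x_i \in \{0,1\}$ we have $x_i^2 = x_i$, and since $G$ is undirected the matrix $Q$ is symmetric, so the two off-diagonal entries attached to each edge combine. Writing $S = \{v_i : x_i = 1\}$ for the vertex set selected by $x$ and letting $e(S)$ be the number of edges of $G$ with both endpoints in $S$, this yields the key identity
\[
f(x) = -|S| + 2\,e(S),
\]
in which the diagonal gives a reward of $1$ per chosen vertex and each internal edge gives a penalty of $2$ (counted once as $(i,j)$ and once as $(j,i)$). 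Evaluating this at an MIS is then immediate: if $S^\ast$ is a maximum independent set of size $k$ with indicator vector $x^\ast$, then $e(S^\ast) = 0$, so $f(x^\ast) = -k$.

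It remains to prove that $-k$ is the global minimum, i.e.\ $f(x) \ge -k$ for every $0/1$ vector $x$. I would do this by showing that no non-independent set can beat an independent one, exploiting that the per-edge penalty $2$ strictly dominates the per-vertex reward $1$. One clean route is a greedy deletion argument: given $S$ with $e(S) \ge 1$, choose a vertex $v \in S$ incident to an internal edge; removing $v$ changes $f$ by $1 - 2\deg_S(v) \le -1$, where $\deg_S(v) \ge 1$ counts the neighbours of $v$ inside $S$. Iterating produces an independent set $S' \subseteq S$ with $f(x_{S'}) \le f(x_S)$, and since $S'$ is independent, $f(x_{S'}) = -|S'| \ge -k$, giving $f(x_S) \ge -k$. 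A direct alternative avoids iteration: the induced subgraph $G[S]$ has an independent set of size at least $|S| - e(S)$ (delete one endpoint of each internal edge), and this set is still independent in $G$, so $|S| - e(S) \le k$; hence $2\,e(S) \ge e(S) \ge |S| - k$ and therefore $f(x) = -|S| + 2\,e(S) \ge -k$.

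The main obstacle is precisely this lower bound — ruling out that some large, dense subset drives $f$ below $-k$. Everything hinges on the specific QUBO weights: were the edge penalty not to exceed the vertex reward, a heavily connected set could be cheaper than any independent set. The symmetric double-count turns the entry $Q_{ij} = 1$ into an effective cost of $2$ per edge, and the inequality $2 > 1$ is exactly what forces every global minimizer to be independent. Combining $f(x) \ge -k$ with the attained value $f(x^\ast) = -k$ shows that the indicator vector of an MIS minimizes $f$, completing the proof.
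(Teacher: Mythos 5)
Your proof is correct and follows essentially the same route as the paper: the identity \( f(x) = -|S| + 2e(S) \) and the greedy deletion of a vertex incident to an internal edge (change \( 1 - 2\deg_S(v) \le -1 \)) are exactly the paper's Claims~\ref{clm:reduce-nonindependent} and~\ref{clm:reduce-to-independent}, while your comparison of independent sets by size, repackaged as ``\(-k\) is attained and \( f(x) \ge -k \) everywhere,'' plays the role of Claim~\ref{clm:independent-set-comparison}. Your parenthetical non-iterative alternative (an independent set of size at least \( |S| - e(S) \) survives inside \( G[S] \), so \( |S| - e(S) \le k \) and \( f(x) \ge -k \) directly) is a clean extra the paper does not include, but the primary argument coincides.
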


\begin{proof}
First, notice that:
\[
f(x) = \sum_{i} Q_{i,i}x_i + \sum_{(i,j) \in E(G)} Q_{i,j}x_i x_j.
\]
Given \( T \subset V(G) \), let \( x_T \) denote the corresponding binary vector for \( T \).

\begin{clm} \label{clm:reduce-nonindependent}
For any set \( T \) that is not independent, there exists \( S \subset T \) such that \( f(x_{S}) < f(x_{T}) \).
\end{clm}

\begin{cpf}
Let \( T \) be a non-independent set, and take \( a, b \in V(T) \) such that \( (a, b) \in E(T) \). Define \( S \) as the induced graph on \( V(T) \setminus \{b\} \). Then we have:
\[
f(x_T) = -|V(T)| + 2|E(T)|,
\]
and
\begin{align}
  f(x_S) &= -|V(S)| + 2|E(S)| \nonumber \\
         &= -(|V(T)| - 1) + 2|E(S)| \nonumber \\
         &\leq -(|V(T)| - 1) + 2(|E(T)| - 1) \nonumber \\
         &= -|V(T)| + 1 + 2|E(T)| - 2 \nonumber \\
         &= -|V(T)| + 2|E(T)| - 1 \nonumber \\
         &< f(x_T). \nonumber
\end{align}
\end{cpf}

\begin{clm} \label{clm:reduce-to-independent}
For any set \( T \) that is not independent, there exists \( S \subset T \) such that \( S \) is independent and \( f(x_{S}) < f(x_{T}) \).
\end{clm}

\begin{cpf}
At each step, pick two vertices \( a, b \in V(T) \) such that \( (a, b) \in E(T) \). Apply Claim~\ref{clm:reduce-nonindependent} to obtain a smaller subgraph with fewer edges such that \( f \) has a smaller value on the subgraph. Repeat this process until arriving at an independent set \( S \), where \( f(x_S) < f(x_T) \), as desired.
\end{cpf}

\begin{clm} \label{clm:independent-set-comparison}
Given two independent sets \( M \) and \( N \) such that \( |V(N)| < |V(M)| \), we have:
\[
f(x_M) < f(x_N).
\]
\end{clm}

\begin{cpf}
The proof follows directly from observing:
\begin{align}
  f(x_M) &= -|V(M)| + 2|E(M)| \nonumber \\
         &= -|V(M)| + 0 \nonumber \\
         &< -|V(N)| \nonumber \\
         &= -|V(N)| + 0 \nonumber \\
         &= -|V(N)| + 2|E(N)| \nonumber \\
         &< f(x_N). \nonumber
\end{align}
\end{cpf}

Finally, to prove the proposition, let \( M \) be any MIS of \( G \) and \( T \) any subset of \( G \). If \( T \) is independent, the claim is done by Claim \ref{clm:independent-set-comparison}. Otherwise, assume \( T \) is not independent. By Claim~\ref{clm:reduce-to-independent}, we can find an independent set \( N \subset T \) such that:
\begin{equation} \label{eq:independent-set-reduction}
f(x_N) < f(x_T).
\end{equation}
Furthermore, since \( M \) is an MIS and \( N \) is an independent set, we have \( |V(N)| \leq |V(M)| \). By Claim~\ref{clm:independent-set-comparison}, it follows that:
\begin{equation} \label{eq:mis-minimization}
f(x_M) < f(x_N).
\end{equation}
Combining Equations~\ref{eq:independent-set-reduction} and~\ref{eq:mis-minimization}, we conclude:
\[
f(x_M) < f(x_T),
\]
completing the proof.
\end{proof}

Furthermore, if one is given that the graph given has an MIS of size at most $k$, and then looking for such MIS, then one could add the term $(x^\top\mathbf{1} - k)^2$ to the cost function. This will make the cost function as follows and would lead to a modification of the MIS quantum solver. Notice that in the following equations, since we are looking to minimize the cost function, then we can drop the constant terms which are indepedent of $x$. Therefore, in such cases, we use $\equiv$, and in cases where we have an actual equality, we will use $=$.
\begin{eqnarray}
\nonumber cost(x) &=& x^{\top}Qx + \lambda (x^\top\Bar{1} - k)^2\\
\nonumber &=& x^\top Q x + \lambda (\sum_{i=1}^n x_i)^2 - 2\lambda k (\sum_{i=1}^n x_i) + \lambda k^2\\
\nonumber &\equiv& x^\top Q x + \lambda (\sum_{i=1}^n x_i)^2 - 2\lambda k (\sum_{i=1}^n x_i)\\
\nonumber &=& x^\top Q x + \lambda x^\top I x - 2\lambda k (\sum_{i=1}^n x_i)\\
\nonumber &=& x^\top (Q +\lambda I) x - 2\lambda k (\sum_{i=1}^n x_i)\\
\nonumber &=& \sum_{i,j} (Q_{ij} + \lambda \delta_{ij}) x_ix_j - \lambda k \sum_i x_i.
\end{eqnarray}
In the above equations, since we care about solutions which minimize $H_{C}$, we can drop the $\lambda k^2$ term. Furthermore, $\lambda$ is a parameter which we'll use to control how much weight should be on the condition of the independent set being as close to $k$ as possible. Therefore, overall, we will have:
\[cost(x) \equiv \sum_{i,j} (Q_{ij} + \lambda \delta_{ij}) x_ix_j - \lambda k \sum_i x_i.\]
We then map our classical QUBO to a quantum Hamiltonian. This is done by replacing each binary variable $x_i$ with the quantum operator $\frac{1}{2}(I - Z_i)$, where $Z_i$ is the Pauli-Z operator on qubit $i$. The problem Hamiltonian is given by:
\[H_C = \frac{1}{4}\sum_{i,j} (Q_{ij} + \lambda \delta_{ij}) (I - Z_i)(I - Z_j) - \lambda k \sum_i (I - Z_i).\]
Expanding this and using the fact that $Z_iZ_j = Z_j Z_i$, we obtain:
\[H_C = \text{constant} + \sum_{i,j} \frac{1}{4}(Q_{ij} + \lambda \delta_{ij})Z_i Z_j - \sum_i \left( \frac{1}{2}\sum_j (Q_{ij} + \lambda \delta_{ij})-\lambda k\right)Z_i.\]
The constant term can be ignored, as it does not affect the optimization. The next step is to define the mixing Hamiltonian for QAOA, which is:
\[H_M = \sum_{i=1}^n X_i,\]
where $X_i$ is the Pauli-X operator on qubit $i$. Therefore, overall we have our cost and mixing Hamiltonians which are:
\begin{equation}\label{hc}
H_C =  \sum_{i,j} \frac{1}{4}(Q_{ij} + \lambda \delta_{ij})Z_i Z_j - \sum_i \left( \frac{1}{2}\sum_j (Q_{ij} + \lambda \delta_{ij})-\lambda k\right)Z_i,
\end{equation}
\begin{equation}\label{hm}
H_M = \sum_{i=1}^n X_i.
\end{equation}
Then one layer of the QAOA circuit models the following two unitary operators which are:
\begin{itemize}
\item $e^{-i \gamma H_C} =  \prod_{i,j} R_{Z_i Z_j}(\frac{\gamma}{4}(Q_{ij} + \lambda \delta_{ij})) \prod_{i=1}^n R_{Z_i}(\sum_j (Q_{ij} + \lambda \delta_{ij})-\lambda k)$,
\item $e^{-i\beta H_M} = \prod_{i=1}^n R_{X_i}(2\beta)$.
\end{itemize}
Then for a p-layer QAOA, we have parameters $\gamma_1,\beta_1,\ldots,\gamma_p,\beta_p$, along with $\lambda$. Note that we fix $\lambda$ at the beginning of the problem, and update $\gamma_i$ and $\beta_i$. Therefore, our circuit $circ$ be equivalent to the following unitary transformation:
\[circ = \prod_{i=1}^p (e^{-i \gamma_i H_C} e^{-i \beta_i H_M} ).\]

Algorithm \ref{alg1} presents the complete modified algorithm for solving the modified version of the MIS problem for graphs $G$ where we know the MIS of $G$ has a size of at most $k$. Overall, this is a hybrid algorithm. We update the parameters using the subgradient method, and each iteration works as follows: First, we initialize the parameters $\bar{\gamma}$ and $\bar{\beta}$. Then, we enter a for loop where, for each iteration, the quantum circuit $circ$ is run $e$ times. The average of the $cost(x)$ is calculated, contributing to the overall average cost. Finally, we use a classical algorithm, such as the subgradient method, to update $\bar{\gamma}$ and $\bar{\beta}$ based on this average cost. Finally, the best $x$, which minimizes the cost corresponding to the MIS with the largest possible size, is returned.

\begin{algorithm}[!htbp]\label{alg-QAOA}
\caption{QAOA for MIS}
\label{alg1}
\begin{algorithmic}[1]
\Procedure{Modified.QAOA.MIS.Solver}{$G,k,p$}
\State Form matrix $Q$ from $G$
\State Form $H_C$ and $H_M$ from Equations \ref{hc} and \ref{hm} respectively
\State Initialize $\lambda$ to a large value
\State Initialize $\vec{\gamma} = (\gamma_1, \ldots, \gamma_p)$ and $\vec{\beta} = (\beta_1, \ldots, \beta_p)$ randomly
\State Set learning rate $\eta$ and number of iterations $T$
\State $\text{best\_cost} \gets \infty$
\State $\text{best\_set} \gets \emptyset$
\For{$t = 1$ to $T$}
    \State $\text{cost}_\text{avg} \gets 0$
    \State $e \gets \text{number of experiments; set it to a large integer for higher accuracy}$
    \For{$i = 1$ to $e$}
        \State Prepare initial state $\ket{+}^{\otimes n}$
        \For{$l = 1$ to $p$}
            \State Apply $e^{-i \gamma_l H_C}$
            \State Apply $e^{-i \beta_l H_M}$
        \EndFor
        \State Measure the state to obtain bit string $x$
        \State Calculate $\text{cost}(x) = x^\top (Q + \lambda I) x - 2\lambda k (\sum_{i=1}^n x_i)$
        \State $\text{cost}_\text{avg} \gets \text{cost}_\text{avg} + \text{cost}(x)$
        \If{$\text{cost}(x) < \text{best\_cost}$}
            \State $\text{best\_cost} \gets \text{cost}(x)$
            \State $\text{best\_set} \gets \{i : x_i = 1\}$
        \EndIf
    \EndFor
    \State $\text{cost}_\text{avg} \gets \text{cost}_\text{avg} / \text{e}$
    \State Calculate subgradient $\nabla \text{cost}_\text{avg}$ with respect to $\vec{\gamma}$ and $\vec{\beta}$
    \State Update $\vec{\gamma} \gets \vec{\gamma} - \eta \nabla_{\vec{\gamma}} \text{cost}_\text{avg}$
    \State Update $\vec{\beta} \gets \vec{\beta} - \eta \nabla_{\vec{\beta}} \text{cost}_\text{avg}$
\EndFor
\State \textbf{return} $\text{best\_set}$
\EndProcedure
\end{algorithmic}
\end{algorithm}

\section{Quantum and Classical Algorithms for D0LII}

Having Theorem \ref{MIS-encoding} and Corollary \ref{MIS-encoding-cor-1}, we are now ready to present our QAOA hybrid and classical Algorithms, which are Algorithms \ref{alg:opt-l} and \ref{alg:opt-2} respectively. Both algorithms start by constructing the character set $V$, which will be the characters appearing in the first $m$ strings of the sequence (all but the last). The axiom of our D0L-system should be the first element of the sequence. It then initializes the production set $P$ to the empty set. We then take $\theta$ to be the input sequence  $(w_0,\ldots,w_m)$ and construct $G_{\theta}$. Algorithm \ref{alg:opt-l} provides an approximate solution to the MIS (based on Algorithm \ref{alg1}), while Algorithm \ref{alg:opt-2} finds the MIS of $G_{\theta}$ in an exact way. Algorithm \ref{alg:opt-2} then calculates $k$. By Theorem \ref{MIS-encoding}, if $|I| < k$, the answer to the D0LII is negative, returning false. Otherwise, similar to the backward direction of the proof of Theorem \ref{MIS-encoding}, we can construct the L-system. Having $P$, we will return the D0L-system $(V, \omega_{0}, P)$ as our D0LII. Notice that Algorithm \ref{alg:opt-l} is not doing the check $|I|<k$ since this algorithm is an approximate algorithm and is not intended to find the exact solution to D0LII.
   
\begin{algorithm}[!htbp]
	\caption{Solving D0LII Using QAOA}
	\label{alg:opt-l}
	\begin{algorithmic}[1]
		\Procedure{QuantInferD0L}{$w_0, w_1, \dots, w_m, p$}
		\State $V \gets \{w_{i}[j] \mid 0 \leq i < m , 1 \leq j \leq |w_{i}|\}$
		\State $\omega \gets w_0$
		\State $P \gets \{\}$
		\State $\theta \gets (w_0, w_1, \dots, w_m)$
		\State construct $G_{\theta}$
		\State $k \gets \sum_{i=0}^{m-1}\lvert w_i \rvert$
        \Statex \textit{In the following function call, for higher accuracy we need to increase $p$}
		\State $I \gets \text{Modified.QAOA.MIS.Solver}(G_{\theta},k,p)$
		\For{$(i,j,s,e) \in I$} 
		\State add $w_i[j]\rightarrow w_{i+1}[s:e]$ to $P$
		\EndFor 
		\State \textbf{return} $(V, \omega, P)$
		\EndProcedure
	\end{algorithmic}
\end{algorithm}

\begin{algorithm}[!htbp]
	\caption{Solving D0LII Using an MIS Classical Solver}
	\label{alg:opt-2}
	\begin{algorithmic}[1]
		\Procedure{ClassicalD0LSolver}{$w_0, w_1, \dots, w_m$}
		\State $V \gets \{w_{i}[j] \mid 0 \leq i < m , 1 \leq j \leq |w_{i}|\}$
		\State $\omega \gets w_0$
		\State $P \gets \{\}$
		\State $\theta \gets (w_0, w_1, \dots, w_m)$
		\State construct $G_{\theta}$
		\State $I \gets MIS_{solver}(G_{\theta})$
		\State $k \gets \sum_{i=0}^{m-1}\lvert w_i \rvert$
		\If{$|I|<k$}
		\State \textbf{return} $False$
		\EndIf
		 
		\For{$(i,j,s,e) \in I$} 
		\State add $w_i[j]\rightarrow w_{i+1}[s:e]$ to $P$
		\EndFor 
		\State \textbf{return} $(V, \omega, P)$
		\EndProcedure
	\end{algorithmic}
\end{algorithm}

Similar to our algorithm, one could use the polynomial encoding of MIS to SAT and solve the D0LII problem using either quantum SAT solvers or classical SAT solvers. We will omit the details of this algorithm here since the idea is very similar to Algorithm \ref{alg:opt-l}.

Notice that the time complexity of our algorithms in the classical and quantum cases is dominated by solving the MIS. Before discussing time complexity, we define: \[ l = \max_{0 \leq i < m} |w_{i+1}|,\] and let \( h \) be the frequency of the most frequently occurring character in \( \theta \), $v$ to be the size of the alphabet ($v$ is the number of distinct characters appearing in $\theta$ other than the last string), and recall that \( k = \sum_{i=0}^{m-1} |w_i| \). To find the time complexity of the classical MIS solver, we start by noting the following:

\begin{eqnarray}
	\nonumber |V(G_{\theta})| &=& \sum_{i=0}^{m-1}\sum_{j=1}^{|w_i|}|V(G_{i,j})|\\
	\nonumber  &\leq& \sum_{i=0}^{m-1}\sum_{j=1}^{|w_i|}l^2\\
	\nonumber  &=& \sum_{i=0}^{m-1}|w_{i}|l^2\\
	\nonumber  &=& l^2\sum_{i=0}^{m-1}|w_{i}|\\
	\nonumber  &=& kl^2.
\end{eqnarray}
Therefore, $|V(G_{\theta})| = \mathcal{O}(kl^2)$, and given the best-known classical MIS solver \cite{robson2001finding}, the worst-case time complexity of the classical MIS solver is $\mathcal{O}(2^{|V(G_{\theta})|/4}) = \mathcal{O}(2^{kl^2/4})$. Therefore, the worst-case time complexity of Algorithm \ref{alg:opt-2} is also $\mathcal{O}(2^{kl^2/4})$.

The time complexity of Algorithm \ref{alg:opt-l} is dominated by MIS quantum solver, which involves the following steps:
\begin{itemize}
    \item Constructing the QUBO Matrix $Q$, with time complexity $\mathcal{O}((kl^2)^2)$.
    \item Quantum circuit preparation and state preparation: Initializing the quantum register in the superposition state \( \ket{+}^{\otimes \mathcal{O}(kl^2)} \), with time complexity $\mathcal{O}(kl^2)$.
    \item Cost Hamiltonian \( H_C \): Applying a rotation gate \( R_{Z_i Z_j} \) for each $(i, j) \in E(G_{\theta})$, with time complexity $\mathcal{O}(|E(G)|) = \mathcal{O}(vh^2 l^4)$, where $h$ is the frequency of the character appearing the most in $\theta$ except the last string.
    \item The mixing Hamiltonian \( H_M \) requires $\mathcal{O}(kl^2)$.
\end{itemize}
Therefore, the total time complexity for quantum operations over $p$ layers is $\mathcal{O}(p(kl^2 + vh^2l^4)$.
We then have the following time complexities for the measurement and classical post-processing:

\begin{itemize}
    \item For measuring states we have $\mathcal{O}(kl^2)$ which we repeat this process $e$ times to get multiple samples. For each sampled bit string, calculating the cost function is $\mathcal{O}\left((kl^2)^2\right)$ which would give us a total time complexity of $\mathcal{O}(e \cdot (kl^2)^2)$.
    \item For parameter optimization (subgradient method):
    The classical optimization step updates the parameters \( \gamma_1, \ldots, \gamma_p \) and \( \beta_1, \ldots, \beta_p \) using a gradient-based method. Each update step takes \( \mathcal{O}(p \cdot e) \) for one iteration. Therefore, the total time complexity for parameter optimization is $\mathcal{O}(T \cdot p \cdot e)$.
\end{itemize}

The last step-by-step analysis gives us the total time complexity of the QAOA algorithm, which is:
\[
\mathcal{O}(T \cdot p \cdot e \cdot (kl^2)^2 + p(kl^2 + vh^2l^4)).
\]
This equation shows that Algorithm \ref{alg:opt-l} is a polynomial-time approximate algorithm in all of $T, p, k, l, h$. For higher accuracy, we increase $p$. In practice, $p$ is typically taken to be $\log(\text{no. of vertices})$. This is significantly better than the \( 2^{kl^2/4} \) time complexity of the algorithm if we had used the quantum algorithm proposed by \cite{dorn2005quantum}. The implementation of both quantum and classical D0LII solvers developed for this paper is available at: \href{https://github.com/alilotfi90/D0L-Quant-and-Classical-Solver}{GitHub}.

\section{Conclusion and Future Directions}

In this paper, we introduced the characteristic graph of deterministic L-systems inductive inference. This graph allowed us to map the deterministic L-systems inductive inference problem into maximum independent set problem and the SAT problem. In particular, we proved that the maximum independent set of a certain size in the characteristic graph enables the recovery of a deterministic L-system, which generates the input sequence.
Moreover, our development of a quantum algorithm for deterministic L-systems inductive inference is a result that points to another problem in machine learning that benefits from the power of quantum computing. This work not only provided quantum and classical algorithms for deterministic L-systems inductive inference but also opened new avenues for solving other inference problems using graph algorithms. Through our results, one could now translate some of the questions about inference problems of 0L-systems into the language of graph theory. For example, one could easily see that the number of distinct deterministic L-systems systems generating a sequence is precisely equal to the number of distinct maximum independent sets of size $k$ of the characteristic graph of the deterministic L-systems inductive inference, where $k$ is as in Theorem \ref{MIS-encoding}. Furthermore, it is not hard to see that by removing the first condition of the characteristic graph, one could model an 0L-system inference problem. Another application of this work is that, instead of using QAOA for solving the deterministic L-system inductive inference problem, one may use adiabatic quantum computing, leveraging the fact that we now have a viable candidate for the Hamiltonian for this problem. We leave this for future work.

\bibliographystyle{plain}
\bibliography{SLS_ALT_SAT16}

\begin{thebibliography}{10}

\bibitem{aaronson2016complexity}
Scott Aaronson and Lijie Chen.
\newblock Complexity-theoretic foundations of quantum supremacy experiments.
\newblock {\em arXiv preprint arXiv:1612.05903}, 2016.

\bibitem{bernard2021techniques}
Jason Bernard and Ian McQuillan.
\newblock Techniques for inferring context-free {L}indenmayer systems with
  genetic algorithm.
\newblock {\em Swarm and {E}volutionary {C}omputation}, 64:100893, 2021.

\bibitem{bernard2023stochastic}
Jason Bernard and Ian McQuillan.
\newblock Stochastic {L}-system inference from multiple string sequence inputs.
\newblock {\em Soft {C}omputing}, 27(10):6783--6798, 2023.

\bibitem{biamonte2017quantum}
Jacob Biamonte, Peter Wittek, Nicola Pancotti, Patrick Rebentrost, Nathan
  Wiebe, and Seth Lloyd.
\newblock Quantum machine learning.
\newblock {\em Nature}, 549(7671):195--202, 2017.

\bibitem{choi2019tutorial}
Jaeho Choi and Joongheon Kim.
\newblock A tutorial on quantum approximate optimization algorithm ({QAOA}):
  Fundamentals and applications.
\newblock In {\em 2019 {I}nternational {C}onference on {I}nformation and
  {C}ommunication {T}echnology {C}onvergence (ICTC)}, pages 138--142. IEEE,
  2019.

\bibitem{cieslak2022system}
Mikolaj Cieslak, Nazifa Khan, Pascal Ferraro, Raju Soolanayakanahally,
  Stephen~J Robinson, Isobel Parkin, Ian McQuillan, and Przemyslaw
  Prusinkiewicz.
\newblock {L}-system models for image-based phenomics: case studies of maize
  and canola.
\newblock {\em in silico Plants}, 4(1):diab039, 2022.

\bibitem{david2020global}
Etienne David, Simon Madec, Pouria Sadeghi-Tehran, Helge Aasen, Bangyou Zheng,
  Shouyang Liu, Norbert Kirchgessner, Goro Ishikawa, Koichi Nagasawa,
  Minhajul~A Badhon, et~al.
\newblock Global wheat head detection ({GWHD}) dataset: A large and diverse
  dataset of high-resolution {RGB}-labelled images to develop and benchmark
  wheat head detection methods.
\newblock {\em Plant Phenomics}, 2020.

\bibitem{dorn2005quantum}
Sebastian D{\"o}rn.
\newblock Quantum complexity bounds for independent set problems.
\newblock {\em arXiv preprint quant-ph/0510084}, 2005.

\bibitem{duffy2025inductive}
Christopher Duffy, Sam Hillis, Umer Khan, Ian McQuillan, and Sonja~Linghui
  Shan.
\newblock Inductive inference of {L}indenmayer systems: algorithms and
  computational complexity.
\newblock {\em Natural Computing}, 2025.
\newblock Accepted.

\bibitem{farhi2014quantum}
Edward Farhi, Jeffrey Goldstone, and Sam Gutmann.
\newblock A quantum approximate optimization algorithm.
\newblock {\em arXiv preprint arXiv:1411.4028}, 2014.

\bibitem{garey1979computers}
Michael~R Garey and David~S Johnson.
\newblock {\em Computers and {I}ntractability}, volume 174.
\newblock {F}reeman San Francisco, 1979.

\bibitem{harrow2009quantum}
Aram~W Harrow, Avinatan Hassidim, and Seth Lloyd.
\newblock Quantum algorithm for linear systems of equations.
\newblock {\em Physical {R}eview {L}etters}, 103(15):150502, 2009.

\bibitem{hopcroft2001introduction}
John~E. Hopcroft, Rajeev Motwani, and Jeffrey~D. Ullman.
\newblock {\em Introduction to Automata Theory, Languages, and Computation}.
\newblock Addison-Wesley, 2001.

\bibitem{kerenidis2016quantum}
Iordanis Kerenidis and Anupam Prakash.
\newblock Quantum recommendation systems.
\newblock {\em arXiv preprint arXiv:1603.08675}, 2016.

\bibitem{lindenmayer1968mathematical}
Aristid Lindenmayer.
\newblock Mathematical models for cellular interactions in development {I}.
  filaments with one-sided inputs.
\newblock {\em Journal of {T}heoretical {B}iology}, 18(3):280--299, 1968.

\bibitem{lloyd2014quantum}
Seth Lloyd, Masoud Mohseni, and Patrick Rebentrost.
\newblock Quantum principal component analysis.
\newblock {\em Nature {P}hysics}, 10(9):631--633, 2014.

\bibitem{lotfi2024optimal}
Ali Lotfi and Ian McQuillan.
\newblock Optimal {L}-systems for stochastic {L}-system inference problems.
\newblock {\em arXiv preprint arXiv:2409.02259}, 2024.

\bibitem{mcquillan2018algorithms}
Ian McQuillan, Jason Bernard, and Przemyslaw Prusinkiewicz.
\newblock Algorithms for inferring context-sensitive {L}-systems.
\newblock In {\em {U}nconventional {C}omputation and {N}atural {C}omputation:
  17th {I}nternational {C}onference, {UCNC} 2018, Fontainebleau, France, June
  25-29, 2018, Proceedings 17}, pages 117--130. Springer, 2018.

\bibitem{montanaro2016quantum}
Ashley Montanaro.
\newblock Quantum algorithms: an overview.
\newblock {\em {NOJ} Quantum Information}, 2(1):1--8, 2016.

\bibitem{nielsen2001quantum}
Michael~A Nielsen and Isaac~L Chuang.
\newblock {\em Quantum {C}omputation and {Q}uantum {I}nformation}, volume~2.
\newblock Cambridge {U}niversity {P}ress {C}ambridge, 2001.

\bibitem{peruzzo2014variational}
Alberto Peruzzo, Jarrod McClean, Peter Shadbolt, Man-Hong Yung, Xiao-Qi Zhou,
  Peter~J Love, Al{\'a}n Aspuru-Guzik, and Jeremy~L O’brien.
\newblock A variational {E}igenvalue solver on a photonic quantum processor.
\newblock {\em Nature {C}ommunications}, 5(1):4213, 2014.

\bibitem{prusinkiewicz2012algorithmic}
Przemyslaw Prusinkiewicz and Aristid Lindenmayer.
\newblock {\em The {A}lgorithmic {B}eauty of {P}lants}.
\newblock {S}pringer {S}cience \& {B}usiness {M}edia, 2012.

\bibitem{rebentrost2014quantum}
Patrick Rebentrost, Masoud Mohseni, and Seth Lloyd.
\newblock Quantum support vector machine for big data classification.
\newblock {\em Physical {R}eview {L}etters}, 113(13):130503, 2014.

\bibitem{robson2001finding}
John~M Robson.
\newblock Finding a maximum independent set in time \(\mathcal{O}(2^{n/4})\).
\newblock Technical report, Technical Report 1251-01, LaBRI, Université
  Bordeaux I, 2001.

\bibitem{ruan2023quantum}
Yue Ruan, Zhiqiang Yuan, Xiling Xue, and Zhihao Liu.
\newblock Quantum approximate optimization for combinatorial problems with
  constraints.
\newblock {\em Information {S}ciences}, 619:98--125, 2023.

\bibitem{algorithmicbotany}
{University of Calgary}.
\newblock Algorithmic {B}otany.
\newblock \url{https://algorithmicbotany.org/}.

\bibitem{wiebe2012quantum}
Nathan Wiebe, Daniel Braun, and Seth Lloyd.
\newblock Quantum algorithm for data fitting.
\newblock {\em Physical {R}eview {L}etters}, 109(5):050505, 2012.

\bibitem{wiebe2014quantum}
Nathan Wiebe, Ashish Kapoor, and Krysta Svore.
\newblock Quantum algorithms for nearest-neighbor methods for supervised and
  unsupervised learning.
\newblock {\em arXiv preprint arXiv:1401.2142}, 2014.

\bibitem{yu2021quantum}
Hongye Yu, Frank Wilczek, and Biao Wu.
\newblock Quantum algorithm for approximating maximum independent sets.
\newblock {\em Chinese Physics {L}etters}, 38(3):030304, 2021.

\end{thebibliography}


\end{document}